\documentclass[letterpaper, 10 pt, conference]{ieeeconf}
\IEEEoverridecommandlockouts

\usepackage{cite}
\usepackage{amsmath}
\usepackage{amssymb}
\usepackage{amsfonts}
\usepackage[mathscr]{eucal}
\usepackage{algorithm}
\usepackage{algpseudocode}
\usepackage{mathtools}
\usepackage{graphicx}
\usepackage{textcomp}
\let\labelindent\relax
\usepackage{enumitem}
\usepackage{xcolor}
\usepackage{url}
\usepackage[normalem]{ulem}
\usepackage{tikz}
\usetikzlibrary{calc, decorations.pathmorphing, arrows.meta}
\usepackage{cleveref}
\usepackage{booktabs} 
\graphicspath{{figures/}}
\newcommand{\R}{\mathbb{R}}

\newtheorem{theorem}{Theorem}

\newtheorem{assumption}{Assumption}

\begin{document}
\title{Direct and Indirect Data-Driven Control\\ with Prior Information about the Equilibrium Manifold}

\author{Yi-Chun Liao$^{\,a}$, Valentina Breschi$^{\,b}$, and Marco M. Nicotra$^{\,a}$%
\thanks{$^a$ Department of Electrical, Computer, and Energy Engineering, University of Colorado,  Boulder, CO 80309, USA. This research is supported by the NSF-CMMI Award \#2046212.} 
\thanks{$^b$ Department of Electrical Engineering, Eindhoven University of Technology, Eindhoven, 5612AZ, The Netherlands.}}

\maketitle

\begin{abstract}
By hinging on the assumption that a system to be controlled is fully unknown, many data-driven control approaches do not leverage available or readily inferable priors. In contrast to this viewpoint, this paper analyzes the impact of using the system's equilibrium subspace to inform direct and indirect linear quadratic regulation. For the indirect case, we show how including a constraint on the equilibrium subspace in the identification problem changes the statistical properties of the learned model. In particular, we show that enforcing consistency with respect to the equilibrium subspace leads to a reduction in the estimator variance that, in turn, enhances model-based control performance. In the direct case, we show how this prior can be leveraged to gain insight into the controlled system without requiring an explicit identification step. These results are supported by both numerical and experimental evidence, showcasing the advantages of explicitly leveraging the equilibrium manifold as a prior in data-driven control.         
\end{abstract}

\section{Introduction}
Data-driven control offers a systematic framework for synthesizing controllers in the absence of a model for the controlled system \cite{DD_Control,DD_Survey}. In particular, data-driven methods can be divided into indirect and direct approaches. The former uses data to identify a model of the controlled system, and then computes the feedback gain by relying on such a model. The latter use the data for the synthesis of the control gains, without an explicit identification step. 

The majority of data-driven control literature focuses on systems that are fully unknown. Yet, there are many applications where the system properties are partially known. In these cases, the performance of data-driven controllers might be improved by incorporating prior knowledge about the system into the identification or direct design of the controller. In the context of indirect methods, existing approaches allow to incorporate priors on asymptotically stable subspaces \cite{van_der_Veen,Umenberge}, known eigenvalues \cite{Miller}, frequency gains \cite{Khosravi_frequency_gain}, steady-state gains \cite{Khosravi}, or structural properties such as positivity \cite{Alberto, Khosravi_positive}, passivity \cite{Shali}, and stabilizability \cite{Shakouri}. As for direct methods, only the incorporation of prior information on system parameters, either fully-known \cite{Huang} or within given bounds \cite{Berberich}, has been explored.

This paper investigates the advantages of incorporating prior knowledge about the equilibrium subspace of the system, which is often available a priori or can readily be estimated from steady-state measurements. For indirect methods, coherence with respect to known equilibrium manifolds is achieved by defining a Linear Matrix Equality that enforces structural consistency between the open loop model and the equilibrium subspace. We also show how such a constraint shapes the properties of  the identified system. For direct methods, we show that prior information about the equilibrium manifold cannot be meaningfully embedded into the direct control design problem. Nonetheless, we show how the knowledge about the equilibrium manifold can be used to recover a model of the controlled system from the designed control law. Numerical and experimental validations showcase the advantages of considering the information on equilibrium manifold in both direct and indirect methods.

\paragraph*{Notation} The set of natural numbers including zero is denoted as $\mathbb{N}_{0}$, while the set of real numbers is denoted by $\mathbb{R}$. The sets of real column vectors of dimension $m$ and real matrices of dimension $m\times n$ are respectively denoted as $\mathbb{R}^{m}$ and $\mathbb{R}^{m \times n}$. Given a matrix $M \in \mathbb{R}^{m \times n}$, we define its Frobenius norm as $\|M\|_F^2=\mbox{trace}(MM^\top)$, its vectorization as $\mbox{vec}(M)$, and we compactly indicate that it is positive definite as $M \succ 0$. If $M$ is full-row rank, its Moore-Penrose pseudoinverse is $M^\dagger=M^\top(MM^\top)^{-1}$. Given a random vector $a$, we indicate its expected value as $\mathbb{E}[a]$ and its covariance $\mathbb{E}[(a-\mathbb{E}[a])(a-\mathbb{E}[a])^{\top}]$ as $\mbox{cov}(a)$.

\section{Setting \& goal}\label{sec:setting}
Let us consider the following linear, time-invariant system
\begin{equation}\label{eq:open-loop}
    x_{k+1}=A x_k+ B u_k+w_k,~~~k \in \mathbb{N}_{0},
\end{equation}
where $x\in\mathbb R^n$ is the system's state, $u\in\mathbb R^m$ is the control input, and $w_k\in\R^n$ is the process noise corrupting the state evolution. Assume that $(A,B)$ is stabilizable and 
let the process noise satisfy the following.
\begin{assumption}[Process noise]\label{assump:process_noise}
    The process noise $w$ is zero-mean, white, and with finite covariance and uncorrelated with the initial state of the system and the input, namely\vspace{-6pt}
\begin{subequations}\label{eq:noise_properties}
    \begin{align}
        &\mathbb{E}[w_k]=0,~~\forall k \in \mathbb{N}_{0},\\
        &\Sigma_w=\mathbb{E}[w_kw_k^\top]=\sigma_w^2I<\infty,~~\forall k \in \mathbb{N}_{0},\\
        &\mathbb{E}[w_kx_0^\top]=0,~~~~\mathbb{E}[w_ku_\tau^\top]=0,~~\forall k,\tau \in \mathbb{N}_{0}.\vspace{6pt}
    \end{align}
    \end{subequations}
\end{assumption}

Suppose that the matrices $A \in \mathbb{R}^{n\times n}$ and $B \in \mathbb{R}^{n \times m}$ are \emph{unknown}, but that we have access to a state/input dataset
\begin{equation}\label{eq:data}
    \mathcal{D}_T=\{u_k,x_k\}_{k=0}^{T},
\end{equation}
collected by feeding the system with a persistently exciting input sequence, i.e., $\{u_k\}_{k=0}^{T}$ guarantees 
\begin{equation}\label{eq:PE_condition}
    \mbox{rank}\left(\begin{bmatrix}
        X_0^\top & U_0^\top
    \end{bmatrix}^\top\right)=n+m,
\end{equation}
where
\begin{equation}\label{eq:data_matrices0}
    \begin{aligned}
        & X_0 := \begin{bmatrix}
            x_0 & x_1 & \cdots & x_{T-1} 
        \end{bmatrix},\\
        &U_0 := \begin{bmatrix}
            u_0 & u_1 & \cdots & u_{T-1} 
        \end{bmatrix}.
    \end{aligned}
\end{equation}
Under these assumptions, we aim to design a Linear Quadratic Regulator (LQR) for the \emph{unknown} system, i.e., a static state feedback law
\begin{equation}\label{eq:LQR}
    u_k=Kx_k,~~~k \in \mathbb{N}_0,
\end{equation}
by levering \emph{priors} on the \emph{equilibrium manifold} of \eqref{eq:open-loop}. 

\section{Preliminaries}\label{sec:preliminaries}
Before analyzing the impact of leveraging priors on equilibrium subspaces into indirect and direct linear quadratic control approaches, we briefly introduce equilibrium subspaces and linear quadratic regulation.   
\subsection{Equilibrium Subspace}
Consider the linear discrete-time system in \eqref{eq:open-loop}. If $B$ is full rank, there exist a full-column rank matrix $G_x\in\R^{n\times m}$ and a matrix $G_u\in\R^{m\times m}$ such that
\begin{equation}\label{eq:nullSpace}
    \begin{bmatrix}
    G_x \\G_u       
    \end{bmatrix}\in\mbox{null}\left(\begin{bmatrix}
        A-I&B
    \end{bmatrix}\right).
\end{equation}
This relation trivially entails the following:
\begin{equation}\label{eq:G_xr_condition}
    G_xr = AG_xr + BG_ur,\qquad \forall r\in\R^m.
\end{equation}
Therefore, the matrices $(G_x,G_u)$ provide a mapping between $r\in\R^m$ and the steady-state equilibria of \eqref{eq:open-loop}. This condition can be further rewritten as
\begin{equation}\label{eq:Equilibrium_subspace_general_condition}
    \begin{bmatrix}
        A & B
    \end{bmatrix}\begin{bmatrix}
        G_x\\
        G_u
    \end{bmatrix}=G_x,
\end{equation}
leading to an equality constraint that can be enforced within identification routines whenever information on $G_x$ and $G_u$ are available a priori. Given a static state-feedback gain $K$ such that $A\!+\!BK$ is Schur, the control law
\begin{equation}\label{eq:feedbackLaw}
    u_k=G_ur+K(x_k-G_xr)
\end{equation}
guarantees that $\bar x =G_xr$ is an Input-to-State Stable (ISS) equilibrium point\cite[Sec.~3.5, App.~B.6]{rawlings2017} for the closed-loop system 
\begin{equation}\label{eq:closed-loop}
    x_{k+1}\!=\!\!\underbrace{(A\!+\!BK)}_{:=A_\pi}x_k\!+\!\underbrace{B(G_u\!\!-\!KG_x)}_{:=B_\pi}r+w_k,~~~\forall k \in \mathbb{N}_0.
\end{equation}
In analogy to the open-loop case, the steady-state equilibrium conditions of \eqref{eq:closed-loop} can be rewritten as
\begin{equation}\label{eq:Equilibrium_CL}
    G_xr = A_\pi G_xr + B_\pi r,\qquad \forall r\in\R^m,
\end{equation}
which, in turn, can be recast as
\begin{equation}\label{eq:open_loop_condition}
    (I-A_{\pi})G_x=B_\pi.
\end{equation}

\subsection{Linear Quadratic Regulator}
Let us consider the error coordinates $e_k=x_k-G_xr$, where $G_x,G_u$ satisfy \eqref{eq:G_xr_condition}. The LQR gain $K$ in \eqref{eq:LQR} can be computed in error coordinates by minimizing the $H_2$-norm of the closed-loop system
\begin{subequations}\label{eq:Input-Output}
    \begin{align}
        e_{k+1}&=(A+BK)e_k+w_k\\
        z_k&=\,\begin{bmatrix}
            Q^{1/2}~~~\\ R^{1/2}K
        \end{bmatrix} e_k,
    \end{align}
\end{subequations}
where $Q\succ0$ and $R\succ0$ are weighting matrices that penalize the state and input error $e_k$ and $e_k^u=u_k-G_ur$, respectively. 

As detailed in \cite{DD_Survey}, minimizing the $H_2$-norm of \eqref{eq:Input-Output} is equivalent to solving
\begin{subequations}\label{eq:LQR_modelBased}
    \begin{align}
        \min_{K,P}~~&\mathrm{trace}((Q+K^{\top}RK)P)\\
        \mbox{s.t.}~~ & (A+BK) P(A+BK)^\top-P+I \preceq 0,\\
        &P\succeq I,
    \end{align}
\end{subequations}
where $P \in \mathbb{R}^{n \times n}$ is the controllability Gramian of \eqref{eq:Input-Output}. Note that the LQR design problem \eqref{eq:LQR_modelBased} is independent of the equilibrium point that is being stabilized, thus shaping how insights into the equilibrium manifold inform direct data-driven control strategies.

\section{Indirect Data-Driven Control with insights on equilibrium manifolds}
A first approach that can be used to design the LQR gain $K$ in \eqref{eq:LQR} is to $i)$ estimate $(A,B)$ from the available data $\mathcal{D}_{T}$ in \eqref{eq:data}, and then $ii)$ solve \eqref{eq:LQR_modelBased} either under the certainty equivalence principle or by introducing a regularization to account for modeling errors as in \cite{regDD_2025}.    

To estimate the matrices, consider $X_0$ and $U_0$ in \eqref{eq:data_matrices0} and 
\begin{equation}\label{eq:data_W0}
    W_0 \coloneqq \begin{bmatrix} w_0 & w_1 & \cdots & w_{T-1} \end{bmatrix}.
\end{equation}
By constructing the state measurement matrix
\begin{equation}\label{eq:data_matrices1}
    X_1 \coloneqq \begin{bmatrix} x_1 & x_2 & \cdots & x_T \end{bmatrix}, 
\end{equation}
it is straightforward to see that 
\begin{equation}\label{eq:data_dynamics}
      X_1 = AX_0 + BU_0 + W_0.
\end{equation}
holds. Hence, the estimation of $A$ and $B$ can be carried out by solving the least squares problem
\begin{equation}\label{eq:iDD_classic}
\min_{A, B} \quad\frac{1}{2}\left\| X_1 - \begin{bmatrix}
       A & B
   \end{bmatrix}  \begin{bmatrix}
       X_0 \\ U_0
   \end{bmatrix}\right\|_F^2,
\end{equation}
resulting in
\begin{equation}\label{eq:Ind_LS}
    \begin{bmatrix}
       \hat{A}_{\mathrm{LS}} & \hat{B}_{\mathrm{LS}}
   \end{bmatrix} = X_1\begin{bmatrix}
       X_0 \\ U_0
   \end{bmatrix}^\dagger.
\end{equation}
Through standard arguments (see, e.g.,~\cite[App. II.2]{ljung1999}) and under Assumption~\ref{assump:process_noise}, these estimates can be proven to satisfy
\begin{subequations}\label{eq:bias_LS}
\begin{equation}
     \beta_{\mathrm{LS}}=\mathbb{E}\!\left[\mbox{vec}\!\left(W_0\begin{bmatrix}
        X_0 \\ U_0
    \end{bmatrix}^\dagger\right)\!\right]\!,
\end{equation}
where 
\begin{equation}
\beta_{\mathrm{LS}}=\mathbb{E}\!\left[\mbox{vec}\!\left(\begin{bmatrix}
        \hat{A}_{\mathrm{LS}} \!&\! \hat{B}_{\mathrm{LS}}
    \end{bmatrix}\right)\right]-\mbox{vec}\!\left(\begin{bmatrix}
        A \!&\! B
    \end{bmatrix}\right),
\end{equation}    
\end{subequations}
hence being \emph{unbiased} when $W_0$ is uncorrelated with $X_0$ and $U_0$. Moreover, it can be proven that
\begin{subequations}\label{eq:cov_LS}
\begin{equation}
    \mbox{cov}\!\left(\mbox{vec}\!\left(\begin{bmatrix}
        \hat{A}_{\mathrm{LS}} \!&\! \hat{B}_{\mathrm{LS}}
    \end{bmatrix}\right)\right)=\mathbb{E}\left[\mbox{vec}\left(\varepsilon_{\mathrm{LS}}\right)\mbox{vec}\left(\varepsilon_{\mathrm{LS}}\right)^{\!\top} \right],
\end{equation}
where
\begin{equation}
    \varepsilon_{\mathrm{LS}}=W_0\begin{bmatrix}
        X_0\\
        U_0
    \end{bmatrix}^{\dagger}\!\!-\mathbb{E}\left[W_0\begin{bmatrix}
        X_0\\
        U_0
    \end{bmatrix}^{\dagger}\right].\label{eq:varepsilon_LS}
\end{equation}
\end{subequations}
However, whenever one is provided not only with the data $\mathcal{D}_T$ but also with a pair $(G_x,G_u)$ satisfying \eqref{eq:nullSpace}, this additional information is not accounted for in this standard least squares problem. Therefore, the estimates in \eqref{eq:iDD_classic} could be inconsistent with the equilibrium subspace of the system.

To overcome this potential inconsistency, we propose to incorporate \eqref{eq:Equilibrium_subspace_general_condition} into the least squares problem \eqref{eq:iDD_classic}, leading to the following constrained identification problem:
\begin{subequations}\label{eq:iDD_equilibrium}
    \begin{align}
\min_{A, B} ~~&\frac{1}{2}\left\| X_1 - \begin{bmatrix}
       A & B
   \end{bmatrix}  \begin{bmatrix}
       X_0 \\ U_0
   \end{bmatrix}\right\|_F^2\\
\text{s.t.}~~ & \begin{bmatrix}
       A & B
   \end{bmatrix} \begin{bmatrix}
       G_x \\ G_u
   \end{bmatrix} = G_x,
\end{align}
\end{subequations}
whose solution is
    \begin{equation}\label{eq:iDDequilibrium_estimates}
        \begin{bmatrix}
            \hat{A} \!&\! \hat{B}
        \end{bmatrix}=X_1\begin{bmatrix}
            X_0\\
            U_0
        \end{bmatrix}^\dagger\left(I-\begin{bmatrix}
            G_x\\
            G_u
        \end{bmatrix}M\right)+G_x M,
    \end{equation}
    with 
    \begin{equation}\label{eq:aid_matrix}
    \begin{aligned}
        &M\!=\!\left(\!N\!\begin{bmatrix}
            G_x\\
            G_u
        \end{bmatrix}\right)^{\!\!-1}\!N \mbox{~~and~~}N\!=\!\begin{bmatrix}
            G_x\\
            G_u
        \end{bmatrix}^{\!\!\top}\!\!\left(\begin{bmatrix}
            X_0\\
            U_0
        \end{bmatrix}\begin{bmatrix}
            X_0\\
            U_0
        \end{bmatrix}^{\!\!\top}\right)^{\!\!-1}\!\!\!,
        \end{aligned}
    \end{equation}
as formalized, together with the solution's statistical properties, in the following theorem.
\begin{theorem}
   The solution of \eqref{eq:iDD_equilibrium} is \eqref{eq:iDDequilibrium_estimates}     
    and satisfies
    \begin{subequations}
        \begin{align}
           & \beta=\beta_{\mathrm{LS}}-\mathbb{E}\!\left[W_0\begin{bmatrix}
            X_0\\
            U_0 \end{bmatrix}^\dagger\!\begin{bmatrix}
            G_x\\
            G_u
        \end{bmatrix}M\!\right],\label{eq:bias_equilibrium}\\
        &\mbox{cov}\!\left(\mbox{vec}\left(\begin{bmatrix}
            \hat{A} \!&\! \hat{B}
        \end{bmatrix}\right)\right)=\mathbb{E}\left[\mbox{vec}\left(\varepsilon\right)\mbox{vec}\left(\varepsilon\right)^{\!\top} \right],\label{eq:covariance_iDD_equilibrium1}
        \end{align}
    with $\beta_{\mathrm{LS}}$ given in \eqref{eq:bias_LS},
    \begin{align}
        & \beta= \mathbb{E}\!\left[\mbox{vec}\!\left(\begin{bmatrix}
            \hat{A} \!&\! \hat{B}
        \end{bmatrix}\right)\!\right]-\mbox{vec}\!\left(\begin{bmatrix}
            A \!&\! B
        \end{bmatrix}\right),\\
        & \varepsilon\!=\!\varepsilon_{\mathrm{LS}}\!-\!\!\left(W_0\!\begin{bmatrix}
            X_0\\
            U_0
        \end{bmatrix}^{\!\dagger\!}\!\begin{bmatrix}
            G_x\\
            G_u
        \end{bmatrix}\!M\!\!-\!\!\mathbb{E}\!\left[W_0\!\begin{bmatrix}
            X_0\\
            U_0
        \end{bmatrix}^{\!\dagger\!}\!\begin{bmatrix}
            G_x\\
            G_u
        \end{bmatrix}\!M\!\right]\!\right)\!,\label{eq:covariance_iDD_equilibrium2}
    \end{align}
        \end{subequations}
        and $\varepsilon_{\mathrm{LS}}$ defined in \eqref{eq:varepsilon_LS}.
\end{theorem}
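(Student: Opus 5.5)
The proof has two parts. First I derive the closed form \eqref{eq:iDDequilibrium_estimates} from the KKT conditions of \eqref{eq:iDD_equilibrium}. Then I substitute the data relation \eqref{eq:data_dynamics} into that closed form, so that the estimation error splits into the least-squares error minus a correction term.

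\textbf{Closed form.} Write $\Theta=[A~B]$, $Z=[X_0^\top~U_0^\top]^\top$ and $G=[G_x^\top~G_u^\top]^\top$. By \eqref{eq:PE_condition}, $ZZ^\top\succ0$, so the cost is strictly convex in $\Theta$. The constraint $\Theta G=G_x$ is affine, so the KKT conditions are necessary and sufficient. With a multiplier $\Lambda\in\R^{n\times m}$, stationarity of the Lagrangian gives
\begin{equation*}
-(X_1-\Theta Z)Z^\top+\Lambda G^\top=0,
\end{equation*}
that is, $\Theta=\big(X_1Z^\top-\Lambda G^\top\big)(ZZ^\top)^{-1}=\hat\Theta_{\mathrm{LS}}-\Lambda N$, where $N=G^\top(ZZ^\top)^{-1}$ as in \eqref{eq:aid_matrix} and $X_1Z^\top(ZZ^\top)^{-1}=X_1Z^\dagger$.

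Imposing $\Theta G=G_x$ yields $\Lambda\,NG=\hat\Theta_{\mathrm{LS}}G-G_x$. The matrix $NG=G^\top(ZZ^\top)^{-1}G$ is positive definite because $G$ has full column rank, which follows from $G_x$ having full column rank. So $NG$ is invertible and $\Lambda=(\hat\Theta_{\mathrm{LS}}G-G_x)(NG)^{-1}$. Substituting back and using $M=(NG)^{-1}N$ gives $\Theta=\hat\Theta_{\mathrm{LS}}(I-GM)+G_xM$, which is \eqref{eq:iDDequilibrium_estimates}.

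\textbf{Error decomposition.} From \eqref{eq:data_dynamics}, $X_1Z^\dagger=\Theta_{\mathrm{true}}ZZ^\dagger+W_0Z^\dagger=\Theta_{\mathrm{true}}+W_0Z^\dagger$, since $ZZ^\dagger=I$ for full-row-rank $Z$. Hence
\begin{equation*}
\hat\Theta=\Theta_{\mathrm{true}}-\Theta_{\mathrm{true}}GM+G_xM+W_0Z^\dagger-W_0Z^\dagger GM.
\end{equation*}
This is the key step: the true parameters satisfy \eqref{eq:Equilibrium_subspace_general_condition}, i.e.\ $\Theta_{\mathrm{true}}G=G_x$, so the middle two terms cancel. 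This cancellation is the only place the prior enters, and it is what makes the proof work. We are left with
\begin{equation*}
\hat\Theta-\Theta_{\mathrm{true}}=W_0Z^\dagger-W_0Z^\dagger GM.
\end{equation*}

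\textbf{Bias and covariance.} Taking expectations and vectorizing, linearity of $\mathrm{vec}$ and of expectation gives \eqref{eq:bias_equilibrium}, with $\beta_{\mathrm{LS}}$ from \eqref{eq:bias_LS}. The correction term is read as $\mathrm{vec}$ of the displayed matrix. Note that $M$ is random, since it depends on $Z$, so it must stay inside the expectation. The deviation of $\hat\Theta$ from its mean is the error above minus its expectation, which is exactly $\varepsilon_{\mathrm{LS}}$ minus the centered correction term, i.e.\ \eqref{eq:covariance_iDD_equilibrium2}. The covariance formula \eqref{eq:covariance_iDD_equilibrium1} then holds by definition of covariance.

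Apart from the cancellation, the only points needing care are the invertibility of $NG$ and the identity $ZZ^\dagger=I$. Both follow from \eqref{eq:PE_condition} together with the full column rank of $G_x$.
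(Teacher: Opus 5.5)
Your proposal is correct and follows the same route as the paper's proof. Both solve the KKT stationarity condition for the multiplier to obtain the closed form \eqref{eq:iDDequilibrium_estimates}, then substitute \eqref{eq:data_dynamics} and use the fact that the true $\begin{bmatrix} A & B\end{bmatrix}$ satisfies the equilibrium constraint to reduce the error to $W_0\begin{bmatrix} X_0\\ U_0\end{bmatrix}^\dagger\left(I-\begin{bmatrix} G_x\\ G_u\end{bmatrix}M\right)$, from which the bias and covariance follow; your checks that $N\begin{bmatrix} G_x\\ G_u\end{bmatrix}$ is invertible and that the pseudoinverse is a right inverse fill in details the paper leaves implicit.
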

\begin{proof}
    Since \eqref{eq:iDD_equilibrium} is a convex problem, its explicit solution can be retrieved through the associated Karush-Kuhn-Tucker (KKT) conditions. In particular, consider the Lagrangian associated with \eqref{eq:iDD_equilibrium}, namely
    \begin{equation*}
        \mathcal{L}\!=\!\frac{1}{2}\left\| X_1 \!-\! \begin{bmatrix}
       A \!&\! B
   \end{bmatrix}\!  \begin{bmatrix}
       X_0 \\ U_0
   \end{bmatrix}\right\|_F^2\!+\mathrm{tr}\!\left(\left[\!\begin{bmatrix}
       A \!&\! B
   \end{bmatrix} \!\begin{bmatrix}
       G_x \\ G_u
   \end{bmatrix} \!-\! G_x\!\right]\!\mu^{\top\!\!}\right),
    \end{equation*}
    where $\mu \in \mathbb{R}^{n \times m}$ are the Lagrangian multipliers associated with the equilibrium subspace constraint. The KKT conditions satisfied by the optimal solution of \eqref{eq:iDD_equilibrium} are:
    \begin{subequations}
        \begin{align}
            & -\left(X_1 \!-\! \begin{bmatrix}
       \hat{A} \!&\! \hat{B}
   \end{bmatrix}  \begin{bmatrix}
       X_0 \\ U_0
   \end{bmatrix}\right)\begin{bmatrix}
       X_0 \\ U_0
   \end{bmatrix}^{\!\top}+\mu^\star\begin{bmatrix}
       G_x \\ G_u
   \end{bmatrix}^{\!\top}\!=\!0,\label{eq:KKT1}\\
            & \begin{bmatrix}
       \hat{A} \!&\! \hat{B}
   \end{bmatrix}\begin{bmatrix}
       G_x \\ G_u
   \end{bmatrix}-G_x=0.\label{eq:KKT2}
        \end{align}
    \end{subequations}
    By expressing the estimate $\begin{bmatrix}
       \hat{A} \!&\! \hat{B}
   \end{bmatrix}$ as a function of $\mu^\star$ via \eqref{eq:KKT1} and replacing it into \eqref{eq:KKT2}, we get
   \begin{equation*}
       \mu^\star=\left(X_1\begin{bmatrix}
           X_0\\
           U_0
       \end{bmatrix}^\dagger\begin{bmatrix}
           G_x\\
           G_u
       \end{bmatrix}-G_x\right)\left(N\begin{bmatrix}
           G_x\\
           G_u
       \end{bmatrix}\right)^{\!\!-1},
   \end{equation*}
   where $N$ is defined in \eqref{eq:aid_matrix}. Further replacing this into \eqref{eq:KKT1} leads to \eqref{eq:iDDequilibrium_estimates}. We are thus left to prove the statistical properties of the estimator. To this end, let us exploit \eqref{eq:data_dynamics} to equivalently rewrite \eqref{eq:iDDequilibrium_estimates} as
   \begin{align}
       \nonumber \begin{bmatrix}
            \hat{A} \!&\! \hat{B}
        \end{bmatrix}&=\begin{bmatrix}
            A \!&\! B
        \end{bmatrix}-\left(\begin{bmatrix}
            A \!&\! B
        \end{bmatrix}
        \begin{bmatrix}
            G_x\\
            G_u
        \end{bmatrix}-G_x\right)M
        +\\
        \nonumber &\quad\quad+W_0\begin{bmatrix}
            X_0\\
            U_0
        \end{bmatrix}^\dagger\left(I-\begin{bmatrix}
            G_x\\
            G_u
        \end{bmatrix}M\right)=\\
        &=\begin{bmatrix}
            A \!&\! B
        \end{bmatrix}+W_0\begin{bmatrix}
            X_0\\
            U_0
        \end{bmatrix}^\dagger\left(I-\begin{bmatrix}
            G_x\\
            G_u
        \end{bmatrix}M\right).
   \end{align}
   By computing the mean and vectorizing, the result in \eqref{eq:bias_equilibrium} straightforwardly follows. By leveraging the previous expression and the additivity of the mean, it is easy to show that
   \begin{align}
       \nonumber &\begin{bmatrix}
            \hat{A} \!&\! \hat{B}
        \end{bmatrix}\!-\mathbb{E}\!\left[\begin{bmatrix}
            \hat{A} \!&\! \hat{B}
        \end{bmatrix}\right]\!=\!\varepsilon_{\mathrm{LS}}+\\
        &~~~~-\!\left(W_0\!\begin{bmatrix}
            X_0\\
            U_0
        \end{bmatrix}^{\!\dagger\!}\!\begin{bmatrix}
            G_x\\
            G_u
        \end{bmatrix}M\!-\!\mathbb{E}\!\left[W_0\!\begin{bmatrix}
            X_0\\
            U_0
        \end{bmatrix}^{\!\dagger\!}\!\begin{bmatrix}
            G_x\\
            G_u
        \end{bmatrix}M\!\right]\!\right),
   \end{align}
    which corresponds to $\varepsilon$ in \eqref{eq:covariance_iDD_equilibrium2}.
\end{proof}
Note that, once again from standard arguments (see \cite{ljung1999}), under Assumption~\ref{assump:process_noise} the estimate is \emph{unbiased} when $W_0$ is uncorrelated with $X_0$ and $U_0$. Nonetheless, as expected, when these signals are correlated the bias $\beta_{\mathrm{LS}}$ in \eqref{eq:bias_LS} is modified according to the information on the equilibrium manifold. The variance of the estimated parameters is instead always affected by the prior we have enforced.

\section{Direct Data-Driven Control}
Rather than using a two-step approach, direct methods employ a parametrization of the closed-loop dynamics to incorporate measurements into \eqref{eq:LQR_modelBased} without performing a preliminary identification step. Specifically, here we employ the approach in \cite{regDD_2025}, which exploits the covariance transform matrices
\begin{equation}
    \begin{array}{ll}
    \tilde X_1=\frac{1}{T}X_1\begin{bmatrix}
       X_0 \\ U_0
   \end{bmatrix}^\top, & \tilde W_0=\frac{1}{T}W_0\begin{bmatrix}
       X_0 \\ U_0
   \end{bmatrix}^\top,\\
   \tilde X_0=\frac{1}{T}X_0\begin{bmatrix}
       X_0 \\ U_0
   \end{bmatrix}^\top, & ~\tilde U_0=\frac{1}{T}U_0\begin{bmatrix}
       X_0 \\ U_0
   \end{bmatrix}^\top,
    \end{array}
\end{equation}
to build the data-driven closed-loop representation used for the direct design of the LQR in \eqref{eq:LQR}. Specifically, introducing the parameterization
\begin{equation}\label{eq:Covar}
    \begin{bmatrix}
        I\\K
    \end{bmatrix} = \begin{bmatrix}
    \tilde X_0 \\
    \tilde U_0\end{bmatrix}V,
\end{equation}
the closed-loop dynamics can be cast as
\begin{align}
    \nonumber x_{k+1}&=\begin{bmatrix}
        A & B
    \end{bmatrix}\begin{bmatrix}
        I\\
        K
    \end{bmatrix}x_k+w_k=\begin{bmatrix}
        A & B
    \end{bmatrix}\begin{bmatrix}
    \tilde X_0 \\
    \tilde U_0\end{bmatrix}Vx_k+w_k\\
    &=(\tilde X_1-\tilde W_0)Vx_k+w_k.
\end{align}
Given Assumption~\ref{assump:process_noise}, the property
\begin{equation}\label{eq:dd_closed_loop}
    A_\pi=A+BK=(\tilde X_1-\tilde W_0)V,
\end{equation}
 and $T$ sufficiently large so that $\tilde W_0\approx0$, we recast \eqref{eq:LQR_modelBased} as
\begin{subequations}\label{eq:LQR_dDD}
\begin{align} 
\min_{V,P} ~~&\mbox{trace}((Q+V^\top \tilde U_0^\top R\tilde U_0V)P)\\
\mbox{s.t.} ~~&  \tilde X_1 V P V^\top \tilde X_1^\top-P+I\preceq0,\\
&P\succeq I,\\
&\tilde X_0 V = I.
\end{align}
\end{subequations}
The solution $V$ can then be used to compute the LQR feedback gain $K=\tilde U_0V$ and the corresponding (approximate) closed-loop state matrix $A_\pi\approx \tilde X_1V$.\smallskip

\subsection{Exploiting the Equilibrium Subspace \& Opening the Loop}
Thanks to the parameterization \eqref{eq:Covar}, solving \eqref{eq:LQR_dDD} is equivalent to directly tackling the following problem:
\begin{subequations}
    \begin{align}
        \min_{K,P}~~&\mathrm{trace}((Q+K^{\top}RK)P)\\
        \mbox{s.t.}~~ & A_\pi PA_\pi^\top-P+I \preceq 0,\\
        &P\succeq I.
    \end{align}
\end{subequations}
Since $B_\pi$ does not appear in problem, the equilibrium subspace constraint \eqref{eq:open_loop_condition} does not add any new information to the optimization problem. Indeed, for any Schur matrix $A_\pi$, the equilibrium subspace is trivially satisfied by selecting
\begin{equation}\label{eq:B_pi}
B_\pi=(I-A_\pi)G_x,
\end{equation}
after solving \eqref{eq:LQR_dDD}. Therefore, unlike the indirect method, information about $(G_x,G_u)$ cannot be explicitly used to guide the design of the controller. Although this result is somewhat disappointing, the equilibrium subspace can instead be used to compute the open-loop matrices $(A,B)$ associated to the closed-loop matrix $A_\pi$.

Let $(\Gamma_x,\Gamma_u)$ be a pair of matrices that satisfy the null space requirement \eqref{eq:nullSpace} \emph{and} the identity
\begin{equation}\label{eq:Special_Gamma}
    \Gamma_u=K\Gamma_x+I.
\end{equation}
For these matrices, the closed-loop system satisfies $B_\pi=B(\Gamma_u -K\Gamma_x)=B$. It then follows from \eqref{eq:B_pi} that 
\begin{equation}\label{eq:OpenB}
B\approx (I-A_\pi)\Gamma_x.
\end{equation}
Since $A_\pi=A+BK$, it therefore possible to compute
\begin{equation}\label{eq:OpenA}
A\approx A_\pi-(I-A_\pi)\Gamma_x K.
\end{equation}
This result allows us to ``open'' the loop, extracting the matrices $(A,B)$ from the closed-loop matrix $A_\pi$. Note that this operation is performed without compressing the data for fitting purposes, but only by solving a set of equalities associated with the condition on the equilibrium subspace. The reconstructed open-loop model is therefore likely to be less accurate compared with identified ones. Nevertheless, it provides insight into the open-loop representation hidden in the direct control design procedure.

\section{Equilibrium Subspace Identification}\label{sec:EqID}
The matrices $(G_x,G_u)$ characterizing the equilibrium manifold of a system are often known a priori, even when the underlying dynamics of the controlled system are unknown. For example, mechanical systems usually feature trivial equilibrium points (\emph{position = reference} and \emph{velocity = zero}), irrespective of the system dynamics. 
Nonetheless, in cases where this information is not available a priori, $(\Gamma_x,\Gamma_u)$ can be estimated using static experiments. Given a stabilizing feedback gain $K$ obtained by solving \eqref{eq:LQR_modelBased}+\eqref{eq:iDD_classic} or \eqref{eq:LQR_dDD}, let
\begin{equation}
    R=\begin{bmatrix}r_1&r_2&\ldots&r_p\end{bmatrix},
\end{equation}
be a collection of $p\geq m$ references, with $\mbox{rank}(R)=m$. Given $p$ independent experiments using the control inputs
\begin{equation}
    u_{k|i}=Kx_{k|i}+r_i,~~~i=1,\ldots,p,
\end{equation}
the resulting closed-loop responses
\begin{equation}
    x_{k+1|i}=(A+BK) x_{k|i}+Br_i+w_{k|i},
\end{equation}
satisfy the input-to-state stable equilibrium conditions
\begin{subequations}\label{eq:Eq_experiments}
\begin{align}
        \bar x_i &=(A+BK) \bar x_i+Br_i,\\
        \bar u_i &= K\bar x_i+r_i,\label{eq:ubar}
\end{align}
\end{subequations}
for all $i=1,\ldots,p$. Given the equilibrium map $\bar x_i=\Gamma_x r_i$ and $\bar u_i=\Gamma_u r_i$, let
\begin{subequations}
\begin{align}
    \bar X&=\begin{bmatrix}
        \bar x_1&\bar x_2&\ldots&\bar x_p\end{bmatrix}
    ,\\ \bar U&=\begin{bmatrix}
        \bar u_1&\bar u_2&\ldots&\bar u_p\end{bmatrix}.
\end{align}
\end{subequations}
It then follows from $\bar X=\Gamma_xR$ and $\bar U=\Gamma_uR$ that the equilibrium subspace matrices can be obtained using
\begin{equation}\label{eq:EquilibriumID}
    \begin{bmatrix}
        \Gamma_x \\\Gamma_u 
    \end{bmatrix}=\begin{bmatrix}
        \bar X\\\bar U
    \end{bmatrix}R^\dagger.
\end{equation}
Moreover, \eqref{eq:Eq_experiments} is sufficient to show that $(\Gamma_x,\Gamma_u)$ will satisfy the requirements \eqref{eq:Equilibrium_CL} and \eqref{eq:Special_Gamma}. The only thing left to do is estimate the equilibrium points $\bar x_i$ from data, for all $i=1,\ldots,p$. Since $\mathbb{E}(w_{k|i})=0$ under Assumption~\ref{assump:process_noise}, 
\begin{equation}
    \bar x_i\approx\frac1{T_{eq}}\sum_{k=1}^{T_{eq}}x_{\tau+k|i},
\end{equation}
where $\tau>0$ must be sufficiently large for the transient response to be depleted and $T_{eq}>0$ is the averaging window.\smallskip

Replacing the equality \eqref{eq:EquilibriumID} within \eqref{eq:OpenB}-\eqref{eq:OpenA}, yields
\begin{subequations}
\begin{align}
        A&\approx\tilde X_1V-(I-\tilde X_1V)\bar XR^\dagger\tilde U_0V,\\
        B&\approx(I-\tilde X_1V)\bar XR^\dagger,
\end{align}
\end{subequations}
hence being functions of the collected data and the data-driven controller.

\section{Numerical Validation} \label{sec:examples}
\begin{figure}
    \centering
    \begin{tikzpicture}[
    mass/.style={draw, thick, minimum width=1.8cm, minimum height=1.8cm, font=\Large},
    spring/.style={thick, decorate, decoration={zigzag, pre length=0.5cm, post length=0.5cm, segment length=3mm, amplitude=2.5mm}},
    arrow/.style={-{Stealth[length=3mm, width=2mm]}, thick}
]

    \node[mass] (m1) at (0,0) {$m_1$};
    \node[mass] (m2) at (5.5,0) {$m_2$};

    \draw[spring] ([yshift=0.5cm]m1.east) -- node[above=4mm, font=\large] {$k$} ([yshift=0.5cm]m2.west);

    \draw[thick] ([yshift=-0.5cm]m1.east) -- ++(1.6,0) coordinate (c_start);
    \draw[thick] (c_start) ++(0.8, 0.3) -- ++(-0.8, 0) -- ++(0, -0.6) -- ++(0.8, 0);
    \draw[thick] ([yshift=-0.5cm]m2.west) -- ++(-1.6,0) coordinate (p_start);
    \draw[thick] (p_start) ++(0, 0.2) -- ++(0, -0.4);
    \node[font=\large] at (3, 0) {$c$};

    \draw[arrow] (-2.0, 0) -- node[above, font=\large] {$u_1$} (m1.west);

    \draw[arrow] (2.5, -1.2) -- node[below, font=\large] {$u_2$} (1.5, -1.2); 
    \draw[arrow] (3.5, -1.2) -- node[below, font=\large] {$u_2$} (4.5, -1.2); 

\end{tikzpicture}
    \caption{Two-mass mechanical system model used for our simulation study.}
    \label{fig:mechanical_model}
\end{figure}
To compare the proposed equilibrium-informed approaches with a baseline, we perform a set of numerical experiments on the system shown in Figure \ref{fig:mechanical_model}, with $k=100$ and $c=2$. In our test, the true continuous-time state-space matrices of the system are
\begin{align*}
    A_c = \begin{bmatrix} 0 & 0 & 1 & 0 \\ 0 & 0 & 0 & 1 \\ -k~~ & k & -c~~ & c \\ k & -k~~ & c & -c~~ \end{bmatrix},\quad 
    &B_c = \begin{bmatrix} 0 & 0\!\! \\ 0 & 0\!\! \\ 1 & -1~~\!\! \\ 0 & 1 \!\!\end{bmatrix}\!,
\end{align*}
which, using a zero-order hold with sampling time $\tau_s=0.01$ to discretize the dynamics, led to
\begin{align*}
    &A = \begin{bmatrix} 0.9951 &   0.0049 &   0.0099   & 0.0001\\
    0.0049  &  0.9951  &  0.0001  &  0.0099\\
   -0.9770  &  0.9770  &  0.9755  &  0.0245\\
    0.9770 &  -0.9770  &  0.0245  &  0.9755\end{bmatrix},\\ 
    &B = \begin{bmatrix} 0.0000 &  -0.0000\\
    0.0000  &  0.0000\\
    0.0099 &  -0.0098\\
    0.0001 &   0.0098\!\!\end{bmatrix}\!.
\end{align*}
Note that these matrices do not inherit the sparsity of the continuous-time system. However, they do inherit its equilibrium manifold. After using the direct method\footnote{Although $K$ can be obtained using \emph{any} approach for the purpose of the indirect method, we use $K$ of the direct method for fairness of comparison.} to compute the LQR gain $K$, we performed the equilibrium subspace reconstruction procedure detailed in Section \ref{sec:EqID}, using $r_1=[10~~0]^\top$ and $r_2=[0~~10]^\top$ to obtain $\Gamma_x\text{, } \Gamma_u$. We then leveraged these matrices to estimate $(A,B)$ using \eqref{eq:iDDequilibrium_estimates}, and \eqref{eq:OpenB}-\eqref{eq:OpenA}. For the sake of comparison, we also estimated $(A,B)$ using \eqref{eq:Ind_LS}. All experiments were performed with white Gaussian process noise with $0.015$ standard deviation. The different approaches were compared  over 1000 Monte Carlo simulations, using the error metric
\begin{equation}\label{eq:error}
    \rm{error}=\left\|\mbox{vec}\!\left(\begin{bmatrix}
        \hat{A} \!&\! \hat{B}
    \end{bmatrix}\right)-\mbox{vec}\!\left(\begin{bmatrix}
        A \!&\! B
    \end{bmatrix}\right)\right\|_{2}.
\end{equation}
As summarized in Table \ref{tab:monte_carlo_results}, incorporating the equilibrium subspace into the indirect method yields a better approximation of the true model of the system compared to the classic indirect approach. As for the direct method, the proposed approach allows us to estimate the open-loop system. Unsurprisingly, the quality of the estimate obtained by ``opening'' the direct method is lower than that of the indirect methods. This is justified by the fact that the direct method prioritizes the control objective over fitting performance.

\begin{table}
\vspace{10pt}
    \centering
    \caption{Monte Carlo Results (1000 Trials): methods \emph{vs} statistics of the error \eqref{eq:error}.}
    \label{tab:monte_carlo_results}
    \begin{tabular}{lcccc}
        \toprule
        \textbf{Method} & \textbf{Average} & \textbf{Min} & \textbf{Max} & \textbf{Std. Dev.} \\
        \midrule
        \text{Indirect (Classic)} & 0.04026   & 0.01092   & 0.08766   & 0.01230    \\
        \text{Indirect (Equilibrium)}    & \textbf{0.03576}   & \textbf{0.00846}   & \textbf{0.08386}   & \textbf{0.01177}    \\
        \text{Direct }     & 0.06520   & 0.01211  &  0.74169   & 0.04424  \\        
        \bottomrule
    \end{tabular}
\end{table}

\section{Experimental Validation}
Experimental validation was conducted using the Quanser rotary flexible joint module in Figure \ref{fig:Sketch Module}. The system is similar to the one in Section \ref{sec:examples}, except that there is no input $u_2$ and the second degree of freedom is defined using error coordinates. 
\begin{figure}
    \centering
    \begin{tikzpicture}[>=Latex, thick]

    \def\thetaOne{30} 
    \def\thetaTwo{45}  
    
    \def\hubRadius{1.5}
    \def\armLength{4.5}
    \def\springAttachArm{2.5} 

    \draw[dashed, thin, gray] (-1,0) -- (4,0) node[right, text=black] {Reference ($0^\circ$)};

    \draw[->, blue!80!black, thick] (2,0) arc[start angle=0, end angle=\thetaOne, radius=2];
    \node[blue!80!black] at (\thetaOne/2:2.22) {$\theta_1$};

    \draw[->, red!80!black, thick] (\thetaOne:2.8) arc[start angle=\thetaOne, end angle=\thetaTwo, radius=2.8];
    \node[red!80!black] at ({(\thetaOne+\thetaTwo)/2-1}:3.1) {$\theta_2$};

    \begin{scope}[rotate=\thetaOne]
        \draw[draw=black, thick, rounded corners=2pt] (-0.5, -0.8) rectangle (\hubRadius, 0.8);
        
        \draw[dashed, thick, blue!80!black] (0,0) -- (3.5,0);

        \filldraw[orange!80!black, draw=black] (\hubRadius-0.2, 0.6) circle (0.1) coordinate (H1);
        \filldraw[orange!80!black, draw=black] (\hubRadius-0.2, -0.6) circle (0.1) coordinate (H2);
        \node[black!80!black, align=center] at (0.5, -1.1) {Hub};
    \end{scope}

    \begin{scope}[rotate=\thetaTwo]
        \draw[fill=black!80, draw=black, thick, rounded corners=1pt] (0, -0.1) rectangle (\armLength, 0.1);
        \filldraw[orange!80!black, draw=black] (\springAttachArm, 0) circle (0.1) coordinate (A1);
        \node[black] at (\armLength/1.3, 1) {Flexible Arm};
    \end{scope}

    \draw[decorate, decoration={coil, aspect=0.4, segment length=2mm, amplitude=1.5mm}, thick, gray!80] (H1) -- (A1);
    \draw[decorate, decoration={coil, aspect=0.4, segment length=2mm, amplitude=1.5mm}, thick, gray!80] (H2) -- (A1);

    \filldraw[black] (0,0) circle (0.2);
    \filldraw[gray!50] (0,0) circle (0.1);

\end{tikzpicture} 
    \caption{Sketch of the Rotary Flexible Joint Module}
    \label{fig:Sketch Module}
\end{figure}
The equilibrium subspace for this system is trivial: Given an input torque $u=0$, the angular position $\theta_1$ can take arbitrary values, whereas the presence of the spring guarantees $\theta_2=0$. As for the velocities, they are necessarily zero at equilibrium. Based on these considerations, a suitable representation of the equilibrium subspace of the system is 
\begin{equation*}
         G_x = \begin{bmatrix} 1 & 0 & 0 & 0 \end{bmatrix}^\top\!\!\!\!,\quad G_u = \begin{bmatrix} 0\end{bmatrix}. 
\end{equation*}
These matrices satisfy \eqref{eq:nullSpace} and can therefore be used to inform the identification of the system. The controllers are implemented at $500\,\mathrm{Hz}$ using $Q=\mathrm{diag}([1 ~~1.5~~ 0.01~~0])$ and $R=0.01$. The initial data was collected over $20\,\mathrm{s}$.

\begin{figure}
    \centering
    \includegraphics[width=0.8\linewidth]{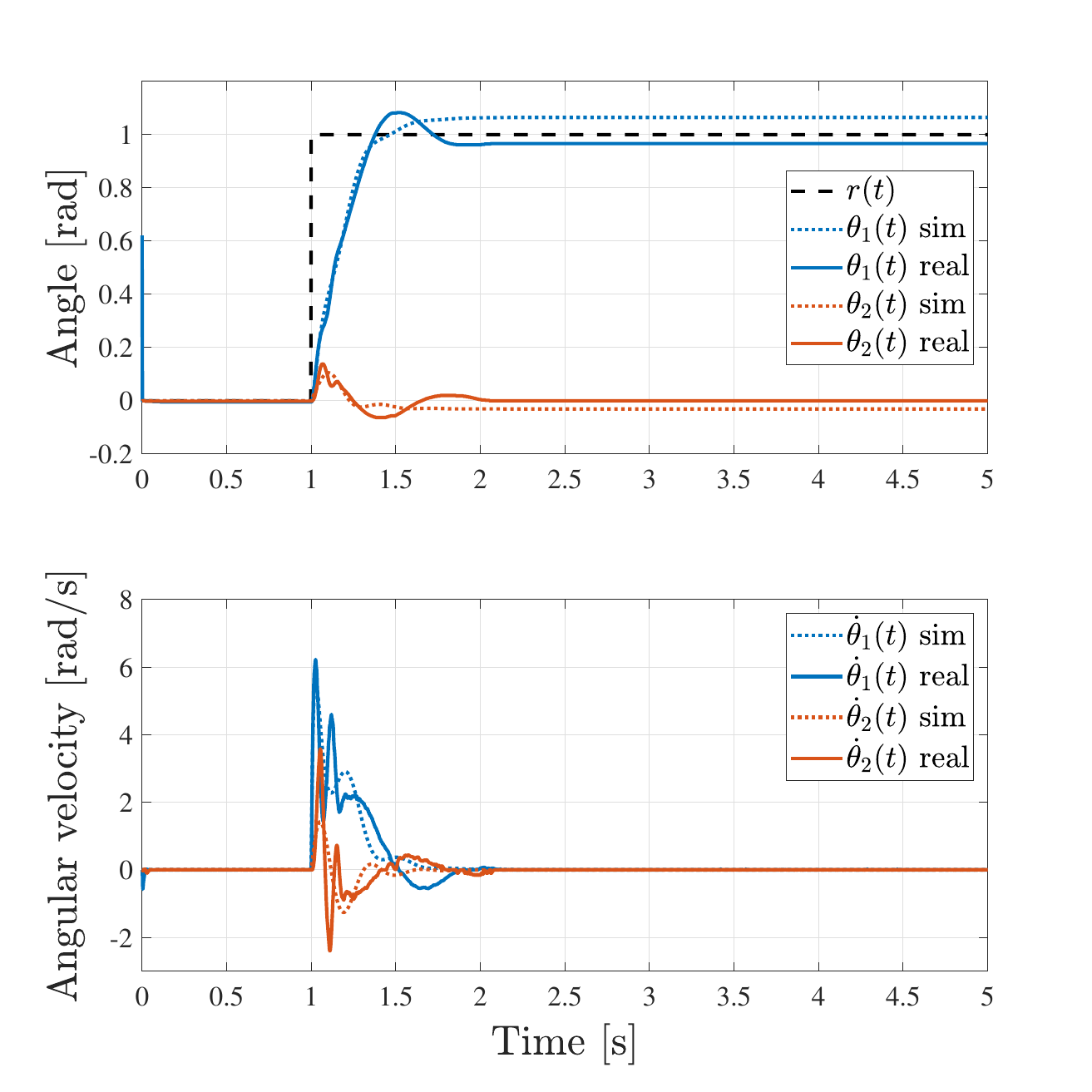}
    \vspace{-12pt}
    \caption{Closed-loop response obtained with the classic indirect method. The model features a significant steady-state mismatch between the measured data (real) and the identified model (sim).}
    \label{fig:Classic}
    \includegraphics[width=0.8\linewidth]{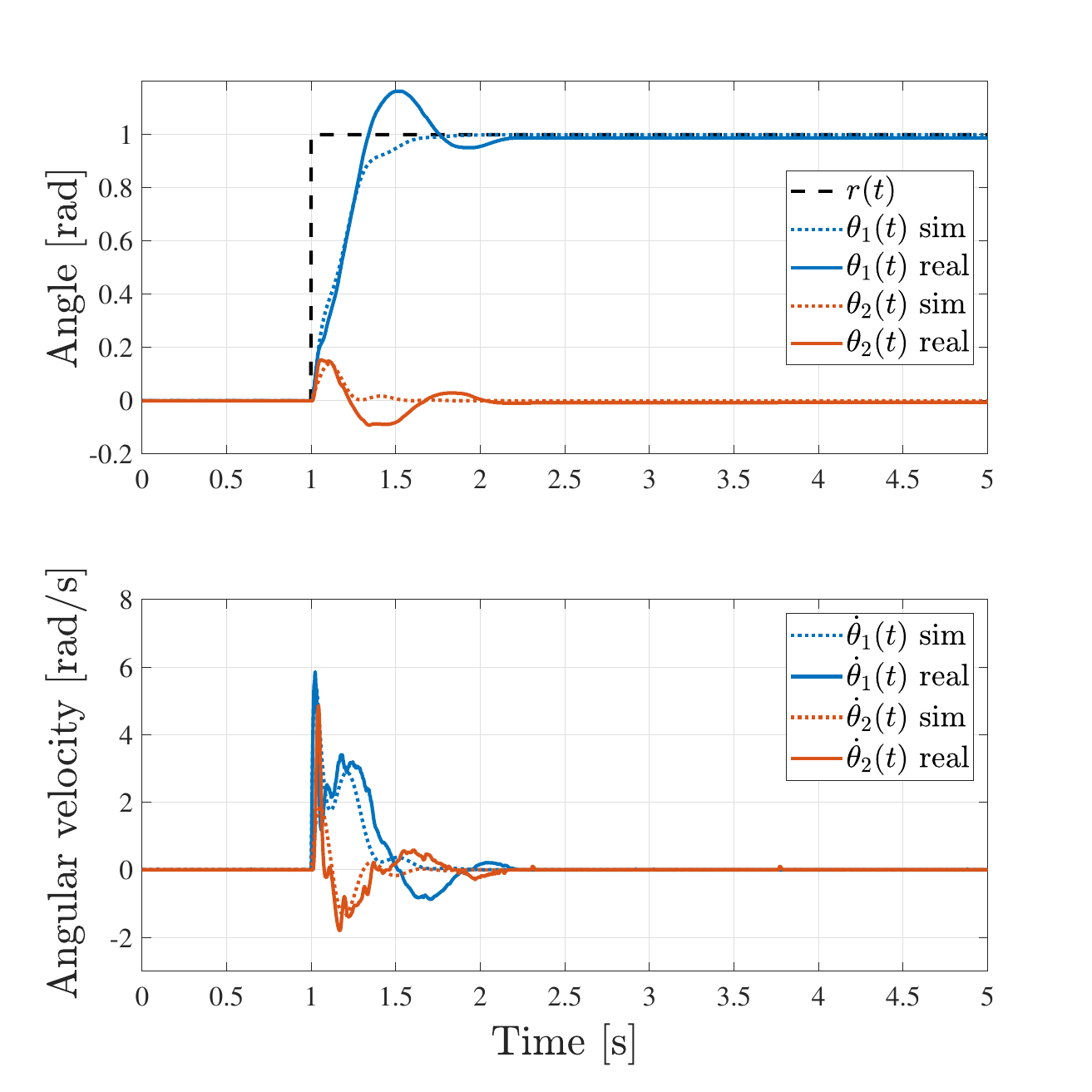}
    \vspace{-12pt}
    \caption{Closed-loop response obtained with the direct method. The steady-state error is negligible.}
    \label{fig:Direct}
    \includegraphics[width=0.8\linewidth]{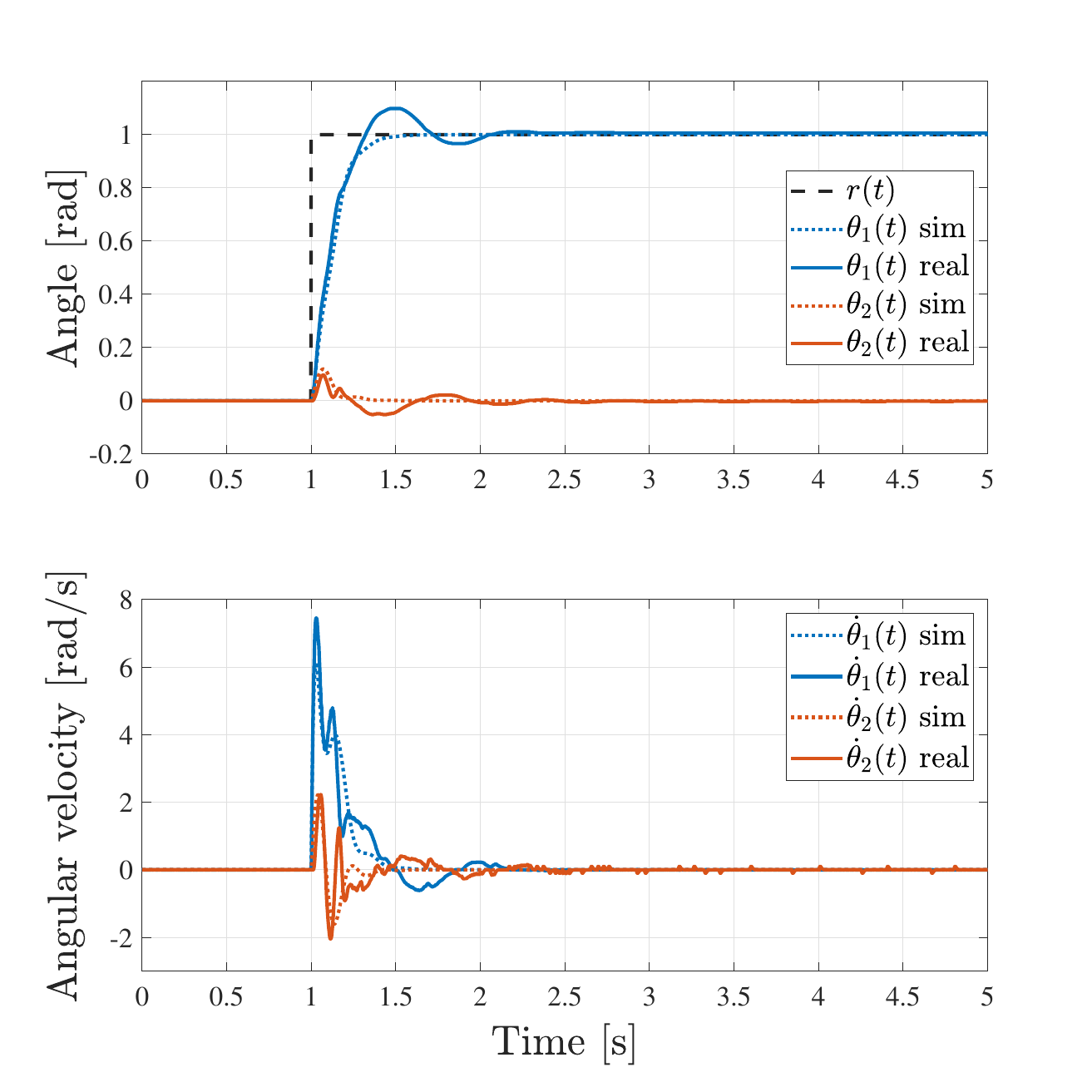}
    \vspace{-12pt}
    \caption{Closed-loop response obtained with the indirect method, augmented with equilibrium subspace priors. The steady-state error is negligible and the overshoot is smaller than that obtained with direct method.}
    \label{fig:Equilibrium}
\end{figure}

Figures \ref{fig:Classic} and \ref{fig:Direct} report the closed-loop responses obtained using the classic indirect method \eqref{eq:LQR_modelBased} + \eqref{eq:iDD_classic} and the covariance-based direct method in \eqref{eq:LQR_dDD}, respectively. The former results in a significant model mismatch at equilibrium. The latter achieves a much better response, with virtually no steady-state error even in the absence of an integrator. 
Figure \ref{fig:Equilibrium} reports the closed-loop response obtained using the proposed indirect method, i.e., \eqref{eq:LQR_modelBased} + \eqref{eq:iDD_equilibrium}. Here, we see that incorporating the equilibrium subspace prior significantly improves closed-loop performance of the indirect method, making its closed-loop response comparable to the direct method at the price of requiring additional information on the system. It is worth remarking that the transient achieved using the proposed indirect method is faster than the one attained with the direct one, at the cost of a more aggressive control action.

\section{Conclusion}
This paper details how knowledge of the equilibrium manifold can benefit both direct and indirect data-driven methods. For indirect methods, we show how this prior can be directly embedded into the system identification problem. For direct methods, we show how it can be used to recover an estimate of the open-loop dynamics starting from the closed-loop one. Our simulation and experimental results provide evidence of the potential advantages and drawbacks of the proposed procedure. 

\bibliographystyle{IEEEtran}
\bibliography{references}

\end{document}